\newcommand{\Endproof}{\hfill$\Box$}
\DeclareMathOperator{\exv}{\mathbb{E}\textrm{ }}
\begin{document}

\title{Quantum  Request-Answer Game with Buffer Model for Online Algorithms}
\author{Kamil~Khadiev}

\institute{
Kazan Federal University, Kazan, Russia
                \\ \email{kamilhadi@gmail.com }}

\maketitle

\begin{abstract}We consider online algorithms as a request-answer game. An adversary that generates input requests, and an online algorithm answers. We consider a generalized version of the game that has a buffer of limited size. The adversary loads data to the buffer, and the algorithm has random access to elements of the buffer. We consider quantum and classical (deterministic or randomized) algorithms for the model.

In the paper, we provide a specific problem (The Most Frequent Keyword Problem) and a quantum algorithm that works better than any classical (deterministic or randomized) algorithm in terms of competitive ratio. At the same time, for the problem, classical online algorithms in the standard model are equivalent to the classical algorithms in the request-answer game with buffer model.
 \\
\textbf{Keywords:} quantum computation, online algorithm, request-answer game, online minimization problem, buffer, keywords search
\end{abstract}

\section{Introduction}

One of the applications for online algorithms is optimization problems \cite{k2016}.
The peculiarity is the following. An algorithm reads an input piece by piece
and returns an answer piece by piece immediately, even if an answer can depend on future pieces of the input. The algorithm should return an answer for minimizing an objective function (the cost of an output). The most standard method to define the effectiveness is the competitive ratio \cite{st85,kmrs86}. 

One of the possible point of view to online algorithms is a request-answer game \cite{a1996}. Here we consider a game of an online algorithm and Adversary that holds input. Adversary requests and the algorithm returns answers. We suggest a reversed version of the game. The algorithm asks an input variable and Adversary returns an answer, but as a price for the answer, Adversary asks to return an output variable. The new version of the game is equivalent to the original one, but we can generalize it.  We provide the new model for online algorithms that is called ``Request-answer Game with Buffer''. The model is a game of three players that are an online algorithm, Adversary and Buffer of limited size. The algorithm can do a request of one of two types:
\begin{itemize}
    \item asking Adversary to load the next block of input variables to the Buffer;
    \item request Buffer for one of the holding variables.
\end{itemize}
For some integer parameter $R$, after each $R$ requests Adversary asks an output variable. If the size of Buffer is $1$ and $R=1$, then the model is equivalent to the original one.

{\em Motivation.} Online algorithms have different applications. One of them is making a decision in current time with no knowledge about future data. Another one is processing a data stream and output a result data stream in online fashion, for example, streaming video on web sites and others. Many programming languages like Java, C++ \cite{javaBR,cpp98} and others use buffered data streams that store data in a fast buffer first, and then an algorithm reads data from the buffer. So, our model is like usage of buffered data streams. Additionally, we have asynchronous processing with online output. In other words, we focus on online behavior of the output stream, but when an algorithm reads an input stream, it can skip some data.    

{\em Quantum model.}
In the paper, we consider a quantum version of ``Request-answer Game with Buffer'' model. Quantum computing itself  \cite{nc2010,a2017,aazksw2019part1} is one of the hot topics in computer science. There are many problems where quantum algorithms outperform the best known classical algorithms  \cite{dw2001,quantumzoo,ks2019,kks2019,kms2019}. Superior of quantum  over classical was shown for different computational models like query model, streaming processing models, communication models and others \cite{an2009,av2009,agky14,agky16,aakk2018,aakv2018,kkm2018,kk2017,ikpy2018,l2009,ki2019}.

 Different versions of online quantum algorithms were considered in \cite{kkm2018,aakv2018} including quantum streaming algorithms as online algorithms \cite{kkkrym2017,kk2019disj}, quantum online algorithms with restricted memory \cite{kk2019,kk2020}, quantum online algorithms with repeated test \cite{y2009}.
In these papers, authors show examples of problems that have quantum online algorithms with better competitive ratio comparing to classical online algorithms.

{\em Our results.}
Here we provide a specific problem and a quantum online algorithm in ``Request-answer Game with Buffer'' model for it. We show that the quantum online algorithm has better competitive ratio than any classical (deterministic or randomized) counterpart. The problem is  ``The Most Frequent Keyword Problem''. Questions are strings of length $k$; the problem is searching the most frequent keyword among words of a text and returning it after each word of the text immediately. The problem \cite{ch2008} is one of the most well-studied ones in the area of data streams \cite{m2005,a2007datastreams,bcg2011}. Many
applications in packet routing, telecommunication logging, and tracking keyword queries in search machines are critically
based upon such routines. The similar problem in online fashion was considered in \cite{blm2015}.

  The paper is organized in the following way. Definitions are in Section \ref{sec:prlmrs}. A description of the most frequent question problem and the quantum algorithm for the problem are described in Section \ref{sec:quantum}. Section \ref{sec:classical} contains lower bounds for classical algorithms.
\section{Preliminaries}\label{sec:prlmrs}
%
{\bf An online minimization problem} consists of a set $\cal{I}$ of inputs and a cost function. Each input $I = (x_1, \dots , x_n)$ is a sequence of requests, where $n$ is a length of the input $|I|=n$. Furthermore, a set of feasible outputs (or solutions) ${\cal O}(I)$ is associated with each $I$; an output is a sequence of answers $O = (y_1, \dots, y_n)$. The cost function assigns a positive real value $cost(I, O)$ to $I\in{ \cal I}$ and $O\in{\cal O}(I)$. An optimal  solution for $I\in{\cal I}$ is $O_{opt}(I)=argmin_{O\in{\cal O}(I)}cost(I,O)$.

Let us define an online algorithm for this problem.
%
{\bf A deterministic online algorithm}  $A$ computes the output sequence $A(I) = (y_1,\dots , y_n)$ such that $y_i$ is computed by $x_1, \dots , x_i$.  
%
%
%
We say that $A$ is $c$-{\em competitive} if there exists a constant $\alpha\geq 0$ such that, for every $n$ and for any input $I$ of size $n$, we have: $cost(I,A(I)) \leq c \cdot cost(I,O_{Opt}(I)) + \alpha,$ where $c$ is the minimal number that satisfies the inequality. Also we call $c$ the {\bf competitive ratio} of $A$. If $\alpha = 0, c=1$, then $A$ is optimal.

{\bf A randomized online algorithm} $R$ computes an output sequence
$R^{\psi}(I) = (y_1,\ldots, y_n)$ such that $ y_i$ is computed from $\psi, x_1, \ldots, x_i$, where $\psi$ is the content of the random tape, i. e., an infinite binary sequence, where every bit is chosen uniformly at random and independently of all the others. By $cost(I,R^{\psi}(I))$ we denote the random variable expressing the cost of the solution computed by $R$ on $I$.
$R$ is $c$-competitive in expectation if there exists a constant $\alpha>0$ such that, for every $I$, $\exv[cost(I,R^{\psi}(I))] \leq c \cdot cost(I,O_{Opt}(I)) + \alpha$. We can say that $c$ is expected competitive ratio for the algorithm.

\subsection{Request-answer Game with Buffer Model}
The standard model for online algorithms can be considered as a request-answer game \cite{a1996}. Adversary holds an input, it sends request $x_i$ to an algorithm, and the algorithm sends answer $y_i$. Here Adversary is an ``active'' player that rules the game and the algorithm is a ``passive'' player that answers on each response.

Let us change the point of view to this game. Both are ``active'' players in some sense.

\begin{itemize}
    \item[] {\bf Round $1$}. The algorithm asks an input variable $x_1$. (The algorithm is active on this round).
    \item[]  {\bf Round $2$}.  Adversary asks an output variable $y_1$. (Adversary is active on this round).
    \item[] ...
    \item[] {\bf Round $2i-1$}. The algorithm asks an input variable $x_i$. (The algorithm is active on this round).
    \item[]  {\bf Round $2i$}. Adversary asks an output variable $y_i$. (Adversary is active on this round).
\end{itemize}

It is easy to see that the new game is equivalent to the original game and the standard model.

Let us consider the modification of the game that has a buffer.
Assume that we have a buffer between the algorithm and Adversary. Let a positive integer $K$ be a size of the buffer. Additionally, there is an integer parameter $R\leq K$. 
The algorithm will ask to load data to the buffer by blocks of $K$ variables. Let $i$ be a number of the loading block. 
The algorithm can do the following actions if it is active on some round:

\begin{itemize}
    \item The algorithm asks to erase the buffer and load the next $K$ input variables $x_{i\cdot K+1},\dots,x_{i\cdot K+K}$ to the buffer.  After that, $i$ is increased by $1$. ($i\gets i+1$)
    \item The algorithm requests any variable from the buffer. We consider a query model (decision tree model) for the algorithm that queries variables from the buffer.
\end{itemize}

The game has the following scenario:

\begin{itemize}
  \item[] {\bf Round $0$}. We initialize $i\gets 0$
    \item[] {\bf Round $1$}. The algorithm is active and it does the possible actions that were described before.
    \item[] {\bf Round $2$}. The algorithm is active and it does the possible actions that were described before.
     \item[] ...
    \item[] {\bf Round $R$}. The algorithm is active and it does the possible actions that were described before.
    \item[] {\bf Round $R+1$}.  Adversary is active. He asks output variables $y_{1},\dots,y_{R}$.
    \item[] ...
    \item[] {\bf Round $(R+1)\cdot j +1$}. The algorithm is active and it does the possible actions that were described before.
    \item[] {\bf Round $(R+1)\cdot j +2$}. The algorithm is active and it does the possible actions that were described before.
     \item[] ...
    \item[] {\bf Round $(R+1)\cdot j +R$}. The algorithm is active and it does the possible actions that were described before.
    \item[] {\bf Round $(R+1)\cdot j +R+1$}. Adversary is active. He asks output variables $y_{j\cdot R+1},\dots,y_{j\cdot R+R}$.
    
\end{itemize}

\noindent
{\bf Comment.} {\em In the case of $K=1$ and $R=1$, the new model is equivalent to the standard online algorithms model.}

In the randomized case, an algorithm that requests data from the buffer can be randomized, and we use a randomized query model in that case. We consider an expected competitive ratio for the model as for the standard model of randomized online algorithms. At the same time, the loading the next block to the buffer is deterministic action.

In the quantum case, an algorithm that requests data from the buffer can be quantum, and we use a quantum query model in that case. Because of the probabilistic behavior of quantum algorithms, we also consider an expected competitive ratio for the model.  At the same time, the loading the next block to the buffer is deterministic action.

We skip details of the quantum model and quantum algorithms here because we use them as quantum subroutines and the rest part is classical. More details on quantum query model and quantum algorithms can be found in \cite{nc2010,a2017,aazksw2019part1} 

\section{A Quantum Algorithm for The Most Frequent Keyword Problem}\label{sec:quantum}

Let us present the problem formally.

    {\bf Problem} For some positive integers $m,d$ and $k$, the input is 
    \[I=(s^1,\dots,s^d,x^1,\dots, x^m).\]
    Here $(s^1,\dots,s^d)$ is a sequence of strings that are interesting keywords for us in the input, $s^j=(s^j_1,\dots,s^j_k)\in \{0,1\}^k$, for $j\in\{1,\dots,d\}$. Strings $x^1,\dots, x^m$ are words of a text, $x^j=(x^j_1,\dots,x^j_k)\in \{0,1\}^k$, for $j\in\{1,\dots,m\}$. The input length is $n=(m+d)\cdot k$. A frequency of a string $t\in\{0,1\}^k$ is $f(t)=\frac{\#(t)}{m}$, where $\#(t)=|\{i:t=x^i, i\in\{1,\dots,m\}\}|$ is a number of occurrence of $t$ in $(x^1,\dots, x^m)$.
     The index $i_0$ of the most frequent string $s^{i_0}$ is such that $f(s^{i_0})=\max\limits_{i\in\{1,\dots,d\}}f(s^i)$ and $i_0$ is minimal.
    We should return index $i_0$ after reading each string $x^j$.  So, the right answer that returns offline algorithm is $(z_1,\dots,z_n)$ where $z_{(j+d)\cdot k}=i_0$ for $j\in\{1,\dots,m\}$ and other output variables are not considered.
    
    The cost of an output $O=(y_1,\dots,y_n)$ is \[cost(I,O)=1 + m-\sum_{j=1}^{m}\delta(y_{(j+d)\cdot l},i_0)\]
    Here $\delta(a,b)=1$ if $a=b$ and $\delta(a,b)=0$ if $a\neq b$
    
\subsection{Quantum Algorithm}

Firstly, we discuss a quantum subroutine that compares two strings of length $l$ for some integer $l>0$.

\subsubsection{The Quantum Algorithm for Two Strings Comparing}\label{sec:compare}
Assume that the subroutine is $\textsc{Compare\_strings}(s,t)$ and it compares $s$ and $t$ in lexicographical order. It returns:
\begin{itemize}
    \item $-1$ if $s<t$;
    \item $0$ if $s=t$;
    \item $1$ if $s>t$.
\end{itemize}

As a base for our algorithm, we will use the algorithm of finding the minimal argument with $1$-result of a Boolean-value function. Formally, we have:
\begin{lemma}\cite{kkmyy2020}\label{lm:first-one}
Suppose, we have a function $f:\{1,\dots,N\}\to \{0,1\}$ for some integer $N$. There is a quantum algorithm for finding $j_0=\min\{j\in\{1,\dots,N\}:f(j)=1\}$. The algorithm finds $j_0$ with query complexity $\sqrt{N}$ and error probability that is at most $\frac{1}{2}$.
\end{lemma}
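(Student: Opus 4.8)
The plan is to reduce the search for $j_0$ to Grover's search and then boost the success probability to the required constant. First I would define the target function directly: given $f:\{1,\dots,N\}\to\{0,1\}$, I want the smallest index where $f$ evaluates to $1$. The natural starting point is standard Grover search, which finds \emph{some} marked element among $N$ with $O(\sqrt{N})$ queries, but here we need the \emph{minimal} such index, so a pure Grover call is not enough. The approach I would take is a binary-search-style reduction: to test whether $j_0$ lies in a prefix $\{1,\dots,p\}$, run Grover search restricted to that prefix and check whether it returns a marked element. Since "is there a $1$ in $\{1,\dots,p\}$" is itself an unstructured search question on $p\le N$ items, each such test costs $O(\sqrt{N})$ queries, and binary search over the cutoff $p$ uses $O(\log N)$ such tests.

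The key steps, in order, would be the following. First, recall the exact-count-agnostic version of Grover search that works when the number of marked items is unknown (the standard technique of trying geometrically increasing guesses for the number of solutions, as in the Boyer--Brassard--H\o yer--Tapp analysis), giving an algorithm that, with constant probability, either returns a marked index or correctly reports that none exists, using $O(\sqrt{N})$ queries. Second, embed this into a binary search on the prefix length: maintain an interval $[lo,hi]$ known to contain $j_0$, query whether a marked element exists in the left half, and recurse accordingly. Third, track the query complexity: the claim in the lemma is that the total is $\sqrt{N}$ (up to constants absorbed by the model), so I would argue that summing the costs of the prefix searches over the binary-search levels still yields $O(\sqrt{N})$ — crucially, the prefix sizes halve, so the dominant cost is the first full-range search and the geometric series of the remaining searches converges. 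Fourth, handle the error probability: each prefix test is correct only with constant probability, so I would amplify each test by $O(\log\log N)$ (or a constant number of) repetitions and take a majority vote, ensuring the overall failure probability stays below $\tfrac12$ across all $O(\log N)$ levels.

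The part I expect to be the main obstacle is reconciling the error analysis with the stated complexity. A naive binary search over $\log N$ prefix tests, each with constant error, compounds the failure probability, so one ordinarily pays a $\log$ or $\log\log$ factor in repetitions; yet the lemma asserts query complexity exactly $\sqrt{N}$ with error at most $\tfrac12$. I would therefore look carefully at whether the intended algorithm is really a monotone binary search (which would let the searches at deeper levels operate on disjoint or nested ranges so that their costs telescope and their errors do not all need to be simultaneously correct), or whether it is a single cleverly-modified Grover iteration whose amplitude is biased toward small indices. The honest version of the proof must either (i) absorb the $\log$ factors into the asymptotic notation, reading "$\sqrt{N}$" as $O(\sqrt N)$, or (ii) invoke a known result — this is exactly why the lemma is cited to \cite{kkmyy2020} rather than reproved — so my proof sketch would culminate in a pointer to that reference for the tight bound, while presenting the binary-search-over-Grover construction as the conceptual skeleton that makes the $O(\sqrt N)$ scaling transparent.

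Given that the lemma is stated with an external citation, I would in practice keep the argument at the level of this sketch: state the reduction to prefix-existence searches, give the Grover-based subroutine for each prefix test, sum the geometric query costs to obtain the $O(\sqrt N)$ bound, and defer the precise constant-factor and success-probability bookkeeping to the cited source, since the surrounding paper only uses the lemma as a black-box subroutine inside \textsc{Compare\_strings}.
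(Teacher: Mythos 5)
The paper does not prove this lemma at all: it is imported verbatim from the cited reference \cite{kkmyy2020} and used as a black box inside $\textsc{Compare\_strings}$, so there is no internal proof to compare your sketch against. Your decision to defer the tight bound to the citation is therefore consistent with what the paper itself does.

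That said, the construction you offer as the ``conceptual skeleton'' has a genuine gap, and it is exactly the one you flag yourself. A binary search over the cutoff $p$ performs $\Theta(\log N)$ Grover existence tests, each correct only with constant probability; for the final answer to equal $j_0$, \emph{all} of them must avoid a false negative, so the success probability decays like $(1-\varepsilon)^{\log N}\to 0$. The one-sided nature of Grover (a returned index can be verified with one extra query, so false positives are free to eliminate) does not rescue this, because a single false negative sends the binary search into the wrong half and the true minimum is lost irrecoverably. Amplifying each test to error $O(1/\log N)$ costs an extra $\Theta(\log\log N)$ factor in queries, which breaks the stated $O(\sqrt N)$ bound: the geometric telescoping of the prefix sizes fixes the query count but not the error count. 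The algorithms that actually achieve $O(\sqrt N)$ queries with error at most $\tfrac12$ take a different route: either apply D\"urr--H\o yer minimum finding to the marked indices (repeatedly Grover-search the strict prefix $\{1,\dots,p-1\}$ of the best verified marked index $p$ found so far, so that the analysis charges queries against the decreasing sequence of candidates rather than requiring every level to succeed simultaneously), or run Grover on prefixes of doubling length starting from length $1$, where the cost up to the first prefix containing $j_0$ is a geometric series $O(\sqrt{j_0})$. If you want a self-contained proof rather than a pointer to \cite{kkmyy2020}, you should replace the top-down binary search by one of these.
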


Let us choose the function $f(j)=(s_j\neq t_j)$. So, we search $j_0$ that is the index of the first unequal symbol of the strings. We search $j_0$ among indexes $1,\dots \min(|s|,|t|)$, where $|s|$ is a length of $s$. Then, we can claim that $s$ precedes $t$ in lexicographical order iff $s_{j_0}$ precedes $t_{j_0}$ in the alphabet for strings. If there are no unequal symbols, then we have one of three options:
\begin{itemize}
    \item if $|s|<|t|$, then $s<t$;
    \item if $|s|>|t|$, then $s>t$;
    \item if $|s|=|t|$, then $s=t$.
\end{itemize}    We use $\textsc{The\_first\_one\_search}(f,N)$ as a subroutine from Lemma \ref{lm:first-one}, where $f(j)=(s_j\neq t_j)$. Assume that this subroutine returns $N+1$ if it does not find any solution.

We apply the standard technique of boosting success probability that was used, for example, in \cite{ks2019}. So, we repeat the algorithm $3\log_2 m$ times and return the minimal answer, where $m$ is a number of strings in the sequence $(x^1,\dots x^m)$. In that case, the error probability is $O\left(\frac{1}{2^{3\log m}}\right)=O\left(\frac{1}{m^3}\right)$.

Let us present the algorithm.
\begin{algorithm}
\caption{$\textsc{Compare\_strings}(s,t,k)$. The Quantum Algorithm for Two Strings Comparing.}\label{alg:strcmp}
\begin{algorithmic}
\State $N\gets min(|s|,|t|)$
\State $j_0 \textsc{The\_first\_one\_search}(f,N)$\Comment{The initial value}
\For{$i \in \{1,\dots,3\log_2 m\}$}
\State $j\gets \textsc{The\_first\_one\_search}(f,N)$
\If{$j\leq k$ and $s_j \neq s_t$}
\State $j_0 \gets  \min(j_0, j)$
\EndIf
\EndFor

\If{$j_0=N+1$ and $|s|=|t|$} 
\State $result \gets 0$\Comment{The strings are equal.}
\EndIf
\If{$((j_0\neq N+1 )$ and $(s_{j_0}<t_{j_0}))$ or $((j_0= N+1 )$ and $(|s|<|t|))$}
\State $result \gets -1$ \Comment{$s$ precedes $t$.}
\EndIf
\If{$((j_0\neq N+1 )$ and $(s_{j_0}>t_{j_0}))$ or $((j_0= N+1 )$ and $(|s|>|t|))$}
\State $result \gets 1$ \Comment{$t$ succeeds $s$.}
\EndIf
\State \Return $result$
\end{algorithmic}
\end{algorithm}

Let us discuss the property of the algorithm:
\begin{lemma}\label{lm:strcmp}
Algorithm \ref{alg:strcmp} compares two strings $s$ and $t$ in lexicographical order with  query complexity $O(\sqrt{\min(|s|,|t|)}\log m)$ and error probability $O\left(\frac{1}{m^3}\right)$.
\end{lemma}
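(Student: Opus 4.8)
The plan is to separate the argument into three parts: correctness of the lexicographic decision rule, correctness of the minimum-index computation via probability boosting, and the query-complexity count.

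First I would establish the decision rule. With $f(j)=(s_j\neq t_j)$ and $N=\min(|s|,|t|)$, the value $j_0=\min\{j:f(j)=1\}$ is precisely the first position at which $s$ and $t$ disagree, when such a position exists. I would argue that lexicographic order is then fixed entirely by the symbols at position $j_0$: we have $s<t$ iff $s_{j_0}<t_{j_0}$, and symmetrically for $s>t$; and that when no disagreement occurs on the common prefix (the case $j_0=N+1$), the order is decided by comparing $|s|$ and $|t|$, with equality exactly when the lengths coincide. This is exactly the case split in the three conditional branches of Algorithm \ref{alg:strcmp}, so correctness reduces to showing that the variable $j_0$ produced by the loop equals the true first-disagreement index with high probability.

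Second, and this is the crux, I would analyze the boosting. A single call to \textsc{The\_first\_one\_search}$(f,N)$ returns the true $j_0$ with probability at least $\frac{1}{2}$ by Lemma \ref{lm:first-one}, and otherwise returns an arbitrary value. The key structural observation is the asymmetry introduced by the verification test $s_j\neq t_j$: any returned index that passes this test genuinely satisfies $f(j)=1$ and is therefore $\geq j_0$, so taking the minimum over the $3\log_2 m$ verified outputs can never undershoot the true first-disagreement index. Consequently the computed minimum equals $j_0$ exactly when at least one of the independent calls succeeds, and by independence the probability that all of them fail is at most $\left(\frac{1}{2}\right)^{3\log_2 m}=m^{-3}$. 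I would also dispatch the degenerate no-disagreement case, where every call has an empty marked set: there each measured index fails the verification and is discarded, so $j_0$ correctly retains its initial value $N+1$ and the length comparison takes over.

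The main obstacle I anticipate is making the error analysis airtight rather than the routine query count. In particular I must be careful that it is the verification step that guarantees one-sided behaviour: without it, a faulty call could report a spurious index below the true $j_0$ and corrupt the minimum, whereas the test discards every index with $f(j)=0$ at the cost of a single deterministic query and no added error. Once this is in place the complexity is immediate, since each of the $3\log_2 m$ searches costs $\sqrt{N}$ queries by Lemma \ref{lm:first-one} and each verification costs $O(1)$, giving total query complexity $O(\sqrt{\min(|s|,|t|)}\,\log m)$ and error probability $O\!\left(\frac{1}{m^3}\right)$, as claimed.
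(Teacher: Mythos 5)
Your proposal is correct and follows essentially the same route as the paper's proof: correctness from the first-disagreement/length case analysis, error probability $0.5^{3\log_2 m}=O(1/m^3)$ from the independence of the $3\log_2 m$ boosted invocations, and the query count from Lemma \ref{lm:first-one}. You are in fact more careful than the paper, which asserts without comment that an error requires \emph{all} invocations to fail; your observation that the verification test $s_j\neq t_j$ makes the error one-sided (so the minimum over verified outputs can never undershoot the true $j_0$) is exactly the missing justification for that step.
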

\begin{proof}
The correctness of the algorithm follows from description and lexicographical order.

Let us discuss the error probability.
The algorithm has error iff there are error in all $3\log_2 m$ invocations of $\textsc{The\_first\_one\_search}$ algorithm. The probability of such event is at most $0.5^{3\log_2 m}=O\left(\frac{1}{m^3}\right)$.
\Endproof
\end{proof}

\subsubsection{A Quantum Algorithm in Request-answer Game with Buffer Model}

Firstly, we present an idea of the algorithm.

We use the well-known data structure a self-balancing binary search tree. As an implementation of the data structure, we can use the AVL tree \cite{avl62,cormen2001} or the Red-Black tree \cite{g78,cormen2001}. Both data structures allow us to find and add elements in $O(\log N)$ running time, where $N$ is a size of the tree.

The idea of the algorithm is the following. We store a triple $(i,s,c)$ in a vertex of the tree, where $i$ is the minimal index of a string from $\{s^1,\dots,s^d\}$ such that $s=s^i$ and $c$ is a number of occurrences of the string $s$ among $\{x^1,\dots,x^m\}$. We assume that a triple $(i,s,c)$ is less than a pair $(i', s',c')$ iff $s$ precedes $s'$ in the lexicographical order. So, we use $\textsc{Compare\_strings}(s,s',k)$ subroutine as the comparator of the vertexes.  The tree represents a set of unique strings from $\{s^1,\dots,s^d\}$ with a number of occurrences among $(x^1,\dots,x^m)$.

Firstly, we load all strings $s^1,\dots,s^d$ one by one to Buffer and add a vertex $v=(j, s^j,0)$ for each string $s^j$ to the tree, here $j\in\{1,\dots,d\}$. We add only one node for each duplicate strings from $s^1,\dots,s^d$ if they exist. The index $j$ in $v$ stores the index of $s^j$ and if there is no a vertex that corresponds to $s^j$, then $j$ is a minimal index from all possible indexes. $0$ in $v$ means that initially we assume that $s^j$ does not occurs among $(x^1,\dots,x^m)$. 

Secondly, we load questions (strings) from $x^1$ to $x^m$ one by one to Buffer and search them in our tree. We increase the number of occurrences. If the string was not found in the tree, then it is not a keyword, i.e. it does not belong to $s^1,\dots s^d$ and we skip it. At the same time, we store \[(i_{max},s,c_{max})=argmax_{(i,t,c)\mbox{ {\it in the tree} }}c\] and recalculate it in each step. When Adversary requests an output variable, then we return $i_{max}$.

Let us present the algorithm formally. Let $BST$ be  a self-balancing binary search tree such that:
\begin{itemize}
    \item $\textsc{Find}(BST, x^i)$ finds a vertex $(j,s,c)$ such that $s=x^i$, or $NULL$ if $x^i$ was not found. The standard algorithm for searching $x^i$ in the tree is comparing with elements of vertexes and moving by the tree according to the result of the comparison. When we invoke the  $\textsc{Compare\_strings}$ subroutine, we request a variable from Buffer for checking a symbol of $x^{i}$ and request to memory when we check a symbol of a string that is stored in a vertex.
    \item $\textsc{Add}(BST, j, s^j)$ adds a vertex $(j,s^j,0)$ to the tree if a  vertex with $s^j$ does not exist; and does nothing otherwise.
     \item $\textsc{Init}(BST)$ initializes an empty tree.
\end{itemize}

\begin{algorithm}[h]
\caption{A Quantum Algorithm for The Most Frequent Keyword Problem.}\label{alg:qmain}
\begin{algorithmic}
\State $\textsc{Init}(BST)$\Comment{The initialization of the tree.}
\State $c_{max}\gets 1$\Comment{The maximal number of occurrences.}
\State $i_{max}\gets 1$\Comment{The index of most frequent question.}
\State $step \gets 0$
\For{$j \in \{1,\dots,d\}$}
\State $\textsc{Load\_To\_Buffer}$\Comment{Load $s^j$ to Buffer}
\State $t\gets``''$\Comment{Initially $t$ is an empty string}
\For{$q \in \{1,\dots,k\}$}\Comment{Reading the string $t$}
\State $t\gets t +\textsc{Request}(q)$\Comment{Requesting $q$-th variable from Buffer and appending the variable to $t$}
\EndFor
\State $\textsc{Add}(BST, j, t)$\Comment{Adding the string $t=s^j$ to the tree as a vertex $(NULL, t, 0)$}
\EndFor
\For{$j \in \{1,\dots,m\}$}
\State $\textsc{Load\_To\_Buffer}$\Comment{Load $x^i$ to Buffer}
\State $v=(i,t,c) \gets \textsc{Find}(BST, x^j)$\Comment{Searching $x^i$ in the tree.}
\If{$v\neq NULL$}\Comment{If $x^i$ belongs to $(s^1,\dots,s^d)$}
\State $c\gets c+1$\Comment{Updating the vertex by increasing the number of occurrences.}
\State $v\gets (i,t,c)$\Comment{Updating the vertex by the new values} 
\If{$c>c_{max}$}\Comment{Updating the maximal value.}
\State $c_{max} \gets c$
\State $i_{max} \gets i$
\EndIf
\EndIf
\EndFor
\If{Adversary request an output variable}
\Return $i_{max}$
\EndIf

\end{algorithmic}
\end{algorithm}

Let us discuss the property of the algorithm.
\begin{theorem}\label{th:qfreq-compl}
The expected competitive ratio $c$ for Algorithm \ref{alg:qmain} is at most ${\cal C}_Q$ where 
\[{\cal C}_Q=O\left(1+\frac{(m\log m)\cdot (\log d)}{\sqrt{k}}\right).\]
\end{theorem}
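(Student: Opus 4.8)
The plan is to exploit that the offline optimum is trivially cheap: an offline solution knows $i_0$ and may output it at every one of the $m$ relevant positions $(j+d)\cdot k$, so $cost(I,O_{Opt}(I))=1$. Hence it suffices to bound the expected cost of Algorithm \ref{alg:qmain} on every input, and by the definition of $cost$ this expected cost is $1+\exv[W]$, where $W$ is the number of relevant positions at which the returned value $i_{max}$ differs from $i_0$. I would bound $\exv[W]$ by separating a wrong answer into two causes: (i) at the moment the output is demanded the algorithm has not yet finished scanning all of $x^1,\dots,x^m$, so its current $i_{max}$ is only a prefix-maximiser and may disagree with the global $i_0$; and (ii) some call to \textsc{Compare\_strings} returned an incorrect result and corrupted the tree or a counter.

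First I would handle cause (i), which is the dominant term and the only place the quantum advantage enters. The key point is that in the buffer model the algorithm may keep loading the next block and processing the next word on every active round, i.e.\ it races ahead of the output schedule and never idles. By Lemma \ref{lm:strcmp} one comparison costs $O(\sqrt{k}\log m)$ queries, and a single \textsc{Find} (or \textsc{Add}) walks a path of length $O(\log d)$ in the balanced tree; the $d$ keywords also need $O(dk)$ plain reads. Thus the whole scan finishes after
\[T=O\bigl(dk+(m+d)\sqrt{k}\,\log m\,\log d\bigr)\]
active rounds, after which the tree holds the exact occurrence counts, so $i_{max}=i_0$ and every later answer is correct. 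Since outputs are requested at the same rate as active rounds (a block of $R$ outputs after every $R$ active rounds), the output at word $j$ sits at active round $(j+d)\cdot k$ and is already correct once $(j+d)\cdot k\ge T$. The number of smaller $j$ is therefore at most
\[\frac{T}{k}-d=O\Bigl(\frac{(m+d)\log m\,\log d}{\sqrt{k}}\Bigr)=O\Bigl(\frac{m\log m\,\log d}{\sqrt{k}}\Bigr)\]
using $d\le m$; note how the $dk$ reading term cancels against the $-d$ offset coming from the ``don't care'' prefix of outputs. This bounds the contribution of cause (i) to $\exv[W]$.

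For cause (ii) I would argue it is negligible. The algorithm performs $O((m+d)\log d)$ comparisons, each erring with probability $O(1/m^3)$ by Lemma \ref{lm:strcmp}; a union bound makes the probability that any of them fails $O((m+d)\log d/m^3)=o(1)$. Even if a failure spoils all $m$ answers, its expected contribution to $\exv[W]$ is at most $m\cdot O((m+d)\log d/m^3)=o(1)$. Adding the two causes gives $\exv[W]=O\bigl(m\log m\,\log d/\sqrt{k}\bigr)$, so the expected cost, and hence the expected competitive ratio, is $1+\exv[W]={\cal C}_Q$.

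The step I expect to be the main obstacle is pinning down cause (i) inside the formal round/buffer bookkeeping: one must justify the ``read ahead'' behaviour (that loading future blocks before the corresponding output is demanded is legal, and that a buffer of size $K\ge k$ suffices to hold one word while it is compared), and then convert the global query budget $T$ into a count of mis-timed relevant outputs via the exact $R$-to-$R$ correspondence between active rounds and demanded output variables. It is precisely the $\sqrt{k}$ rather than $k$ cost per comparison that makes the scan finish after only $O(m\log m\,\log d/\sqrt{k})$ relevant outputs have passed; a classical algorithm, needing $\Theta(k)$ queries per comparison, has $T=\Theta\bigl((m+d)k\log d\bigr)$ and thus lags on $\Theta(m)$ outputs, which is the gap that the lower bounds of Section \ref{sec:classical} are meant to certify.
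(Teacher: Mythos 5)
Your proposal is correct and follows essentially the same route as the paper: bound the number of ``significant'' outputs emitted before the $O\bigl(m\sqrt{k}\log m\log d\bigr)$-round scan completes (dividing the round count by $k$ to get the $O(m\log m\log d/\sqrt{k})$ delay term), and then show via Lemma \ref{lm:strcmp} that comparison errors contribute only $O(1)$ to the expected cost. Your accounting is in fact slightly more careful than the paper's (you track the $O(dk)$ keyword-loading phase and the $d\le m$ assumption explicitly, and use a clean union bound where the paper multiplies an $O(1/m)$ failure probability by $m$), but the decomposition and the key lemma are identical.
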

\begin{proof}
The correctness of the algorithm follows from the description.
Let us discuss the query complexity of $\textsc{Find}(BST, x^j)$. The procedure requires $O(\log d)$ comparing operations $\textsc{Compare\_strings}(x^{j},s^{i'},k)$. Due to Lemma \ref{lm:strcmp}, each comparing operation requires $O(\sqrt{k}\log m)$ queries. The total query complexity of the $\textsc{Find}$ procedure is  $O\left(\sqrt{k}(\log m)\cdot (\log d)\right)$. So, the algorithm  checks all $x^1,\dots, x^m$ in$O\left(m\sqrt{k}(\log m)\cdot (\log d)\right)$ rounds and after that returns right answers for the requests of Adversary. Therefore, the first $O\left(\frac{m\sqrt{k}(\log m)\cdot (\log d)}{k}\right)=O\left(\frac{m(\log m)\cdot (\log d)}{\sqrt{k}}\right)$ ``significant'' output variables can be wrong and others are right. We call output variable $y_{(j+d)\cdot k}$, for $j\in\{1,\dots,m\}$, as ``significant'' because the cost depends on these variables. Hence, the cost is at most $1+O\left(\frac{m(\log m)\cdot (\log d)}{\sqrt{k}}\right)$.

Let us discuss the error probability. Events of error in the algorithm are independent. So, all events should be correct. Due to  Lemma \ref{lm:strcmp}, the probability of correctness of one event is $1-\left(1-\frac{1}{m^3}\right)$. Hence, the probability of correctness of all $O(m\log m)$ events is at least $ 1-\left(1-\frac{1}{m^3}\right)^{\gamma\cdot m\log m}$ for some constant $\gamma$.

Note that 
\[ \lim\limits_{n\to \infty} \frac{\left(1-\frac{1}{m^3}\right)^{\gamma\cdot m\log m}}{1/m}<1;\]
 Hence, the total error probability is at most $O\left(\frac{1}{m}\right)$.

In a case of an error, all ``significant'' output variables can be wrong.

Therefore, the expected competitive ratio of the algorithm is at most

\[{\cal C}_Q=\frac{O(\frac{m-1}{m})\cdot\left(1+O\left(\frac{m(\log m)\cdot (\log d)}{\sqrt{k}}\right)\right)+ O\left(m\cdot\frac{1}{m}\right)}{1}=O\left(1+\frac{m(\log m)\cdot (\log d)}{\sqrt{k}}\right).\]

\Endproof
\end{proof}

\section{Lower Bounds for Classical Algorithms for The Most Frequent Keyword Problem}\label{sec:classical}
There is an input $I_B$ such that any classical (deterministic or randomized) algorithm returns output with the cost at least $O(m)$.
\begin{theorem}
 Any randomized algorithm for the problem has competitive ratio $c$ at least ${\cal C}_R=O(m)>{\cal C}_Q$ in a case of $(\log_2 m)\cdot (\log_2 d)=o(\sqrt{k})$.
\end{theorem}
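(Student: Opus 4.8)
The plan is to exhibit a single hard family $I_B$ on which $cost(I_B,O_{opt})=O(1)$ while every randomized algorithm errs on a constant fraction of the significant output positions, giving expected competitive ratio $\Omega(m)$. Since Theorem~\ref{th:qfreq-compl} gives ${\cal C}_Q = O\left(1 + \frac{m(\log m)(\log d)}{\sqrt k}\right)$, which is $o(m)$ exactly when $(\log_2 m)(\log_2 d) = o(\sqrt k)$, this will establish ${\cal C}_R = \Omega(m) > {\cal C}_Q$ in the stated regime. First I would invoke Yao's minimax principle: build a distribution $\mathcal D$ over inputs and lower bound the expected cost of the best \emph{deterministic} algorithm against $\mathcal D$.

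For the distribution I would fix $d$ small with $d=o(m)$ (a constant suffices), let $s^1$ and $s^2$ be two fixed keywords, and give each word $x^j$ an independent uniformly random hidden bit $b_j$ that determines its class ($s^1$ if $b_j=0$, $s^2$ if $b_j=1$). I would encode the class so that recovering $b_j$ is a single planted-bit detection problem inside a length-$k$ block, whose randomized query complexity is $\Omega(k)$, while the quantum comparison of Lemma~\ref{lm:strcmp} extracts it in $O(\sqrt k\,\log m)$ queries. The global most-frequent keyword $i_0$ is then the majority class among $b_1,\ldots,b_m$; it is a single fixed index, so the offline solution emits $i_0$ at every significant position and pays $cost(I_B,O_{opt})=O(1)$.

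For the accounting, note that the significant output $y_{(j+d)k}$ is demanded after exactly $(j+d)k$ algorithm operations, and that the buffer is one-pass: the block index $i$ only increases, so each word can be queried only while its own block is loaded. Charging $\Omega(k)$ queries to every word whose hidden bit the algorithm has actually recovered, the operation budget $(j+d)k$ available by the $j$-th significant output permits a classical algorithm to recover at most $O(d+j)$ of the $m$ bits. Hence for $j\le\delta m$ with a small constant $\delta$ and $d=o(m)$, the algorithm knows only $o(m)$ of the bits, so the majority of all $m$ bits is, conditioned on its view, within $o(1)$ of a fair coin; its answer equals $i_0$ with probability at most $\tfrac12+o(1)$. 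Summing over the $\Omega(m)$ positions $j\le\delta m$ yields $\exv[cost(I_B,\cdot)]=\Omega(m)$, and dividing by the $O(1)$ offline optimum gives ${\cal C}_R=\Omega(m)$.

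The main obstacle is this accounting step. I must make rigorous, through a distributional query lower bound, that recovering the hidden bit of a not-yet-examined word genuinely costs $\Omega(k)$ randomized queries, and that in the one-pass buffer model these per-word costs do not interfere: queries spent while one block is loaded reveal nothing about a different block, and a block once passed cannot be revisited. Establishing that only $O(d+j)$ bits can be learned within budget, and that the remaining $\Omega(m)$ unknown bits keep the majority essentially unbiased from the algorithm's viewpoint, is precisely what forces the $\Omega(m)$ expected errors; the choice $d=o(m)$ is what keeps the decisive information out of reach classically while Theorem~\ref{th:qfreq-compl} leaves ${\cal C}_Q=o(m)$ for the quantum algorithm.
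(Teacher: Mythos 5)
Your high-level plan (Yao's principle, a per-word hidden bit costing $\Omega(k)$ classically versus $O(\sqrt{k}\log m)$ quantumly, and a budget count showing only $O(d+j)$ bits can be learned before the $j$-th significant output) is a legitimate and in some ways more systematic route than the paper's, which instead concentrates all the hardness into a \emph{single} planted near-miss word: the paper sets $s^1=0^k$, $s^2=1^k$, fixes half the words to $0^k$, and makes the answer hinge on whether one of the remaining $m/2$ words has a single $0$ planted at an unknown position among $mk/2$ bits, so that the whole problem reduces to one instance of unstructured search with randomized query complexity $\Omega(mk)$; the $\Omega(m)$ cost then follows because the $j$-th significant output is demanded after only $(j+d)k$ rounds. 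However, your construction as stated has a genuine gap, and it is not merely the accounting step you flag. You let each word's class be ``equals $s^1$'' versus ``equals $s^2$'' for two \emph{fixed} keywords. Frequency in this problem is exact string equality, and the algorithm can learn $s^1$ and $s^2$ completely with $O(dk)=O(k)$ initial queries; after that, deciding which of the two known strings a word equals (under your promise that it equals one of them) takes a single classical query to any coordinate where $s^1$ and $s^2$ differ. A classical algorithm therefore reads ahead, classifies each word in $O(1)$ rounds, knows all $m$ bits after $O(m+dk)$ rounds, and errs on only $O(m/k+d)$ significant outputs --- the claimed $\Omega(k)$ per-word cost, and with it the whole lower bound, collapses. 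Your phrase ``planted-bit detection inside a length-$k$ block'' is incompatible with ``the class is $s^1$ or $s^2$'': hardness can only come from words that are a near-miss of a keyword (one bit flipped at an unknown position) and hence match \emph{no} keyword, which forces you to redo the frequency bookkeeping --- e.g.\ two teams of words, each word either exactly its team's keyword or a near-miss, with the answer the sign of the difference of the two corrupted counts. That repaired construction is essentially a distributed version of the paper's single-plant tie-breaker.

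Even after that repair, two further points need care. First, the direct-sum step: you must prove that within a total budget of $T$ buffer queries a randomized algorithm determines (in the distributional sense) only $O(T/k)$ of the independent per-block plant/no-plant bits; the one-pass buffer helps, but this is a real composition argument, whereas the paper sidesteps it by invoking a single $\Omega(mk)$ search bound. Second, your target function must remain balanced: with independent plants the two counts concentrate, so you must check that $\mathrm{sign}(J_2-J_1)$ stays near-uniform conditioned on $o(m)$ revealed bits (it does, since $J_2-J_1$ is a lattice random walk, but the $\tfrac12+O(\sqrt{j/m})$ estimate should be stated for this variable, not for a simple majority of keyword labels). Your observation that the offline optimum costs $O(1)$ and that Theorem~\ref{th:qfreq-compl} already supplies ${\cal C}_Q=o(m)$ in the stated regime is correct and matches the paper.
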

\begin{proof}
Let us show that the problem is equivalent to unstructured search problem.
Assume that $m=2t$ for some integer $t$. 
Then, let $x^{t+1},\dots,x^{2t}=0^k$ where $0^k$ is a string of $k$ zeros.
We have two cases for other string:
\begin{itemize}
    \item {\bf case 1:} $x^{1},\dots,x^{t}=1^k$;
    \item {\bf case 2:} there are $z\in\{1,\dots,t\}$ and $u\in\{1,\dots,k\}$ such that $x^{z}_{u}=0$ and $x^{z}_{u'}=1$ for all $u'\in\{1,\dots,u-1,u+1,\dots,k\}$, $x^{z'}=1^k$ for $z'\in\{1,\dots,t\}\backslash \{z\}$.
\end{itemize}
Let $d=2$, $s^1=0^k$ and $s^2=1^k$.

In the first case, the answer is $1^k$. In the second case, the answer is $0^k$. 
Therefore, the problem is equivalent to search $0$ among the first $tk=mk/2$ variables.

Due to \cite{bbbv1997}, the randomized query  complexity of unstructed search among $mk/2$ is $\Omega(mk)$.

In a case of odd $m$, we assign $x^m=1^{k/2}0^{k/2}$, and it is not used in the search. Then, we can consider only $m-1$ strings. So, $m-1$ is even.

Suppose, we have a randomized algorithm $A$ for finding the most frequent question that uses $o(mk)$ queries to buffer when it reads $x^1,\dots,x^m$. Then, Adversary can construct the input $I_B$ such that $A$ obtains a wrong answer.

Therefore, all ``significant'' output variables will be wrong and $cost(I_B,A(I_B))=1+m$. The competitive ratio in that case is ${\cal C}_R=m+1$.

If the algorithm do $O(mk)$ queries to Buffer for computing answer, then $O(m)$ ``significant'' output variables should be returned before getting a right answer. Therefore, $cost(I_B,A(I_B))=O(m)$ and ${\cal C}_R=O(m)$.

In the case of $(\log_2 m)\cdot (\log_2 d)=o(\sqrt{k})$ we have
\[{\cal C}_Q=O\left(1+\frac{m(\log_2 m)\cdot (\log_2 d)}{\sqrt{k}}\right)=o(m)<O(m)={\cal C}_R.\]
\Endproof
 \end{proof}
\section{Conclusion}
We consider a new setting or new model for online algorithms that is useful for real world problems. 
We show that in the case of $(\log_2 m)\cdot (\log_2 d)=o(\sqrt{k})$ the quantum algorithm shows a better competitive ratio than any classical (deterministic or randomized) algorithm. Note that this setting is reasonable. 

\paragraph*{Acknowledgements}
 The research was funded by the subsidy allocated to Kazan Federal University for the state assignment in the sphere of scientific activities, project No. 0671-2020-0065.

We thank Farid Ablayev and Aliya Khadieva from Kazan Federal University for helpful discussions.
\bibliographystyle{plain}
\bibliography{tcs}

\end{document}